\newtheorem{theorem}{Theorem}[section]
\newtheorem{prop}[theorem]{Proposition}
\newcommand{\QED}{\null\nobreak\hfill\ensuremath{\blacksquare}}
\newenvironment{proof}[1][\indent \textit{Proof:}]{\begin{trivlist}
\item[\hskip \labelsep #1]}{\end{trivlist}}
\title{\LARGE \bf
Random-Sampling Monte-Carlo Tree Search Methods for Cost Approximation in Long-Horizon Optimal Control}
\author{Shankarachary Ragi, \emph{IEEE Senior Member} and Hans D.\ Mittelmann
\thanks{This work was supported in part by Air Force Office of Scientific Research under grant FA9550-19-1-0070. This paper was presented in part at The 4th IEEE Conference on Control Technology and Applications 2020 \cite{Ragi_CCTA}.}
\thanks{Shankarachary Ragi (corresponding author) is with Department of Electrical Engineering, South Dakota School of Mines and Technology, Rapid City, SD 57701, USA
        {\tt\small shankarachary.ragi@sdsmt.edu}}%
\thanks{Hans D.\ Mittelmann is with the School of Mathematical and Statistical Sciences, Arizona State University, Tempe, AZ 85281, USA
        {\tt\small nlanchie@asu.edu, mittelmann@asu.edu}}
}
\begin{document}

\maketitle

\begin{abstract}

In this paper, we develop Monte-Carlo based heuristic approaches to approximate the objective function in long horizon optimal control problems. In these approaches, to approximate the expectation operator in the objective function, we evolve the system state over multiple trajectories into the future while sampling the noise disturbances at each time-step, and find the average (or weighted average) of the costs along all the trajectories. We call these methods \emph{random sampling - multipath hypothesis propagation} or RS-MHP. These methods (or variants) exist in the literature; however, the literature lacks results on how well these approximation strategies converge. This paper fills this knowledge gap to a certain extent. We derive convergence results for the cost approximation error from the RS-MHP methods and discuss their convergence (in probability) as the sample size increases. We consider two case studies to demonstrate the effectiveness of our methods - a) linear quadratic control problem; b) UAV path optimization problem. 
\end{abstract}

\begin{IEEEkeywords}
Long horizon optimal control, cost approximation, approximate dynamic programming, multipath hypothesis propagation. 
\end{IEEEkeywords}

\section{Introduction}
Long-horizon optimal control problems appear naturally in robotics, advanced manufacturing, and economics, especially in applications requiring decision making in stochastic environments. Often these problems are solved via dynamic programming (DP) formulation \cite{Chong2009}. DP problems are notorious for their computational complexity, and require approximation approaches to make them tractable. A plethora of approximation techniques called \emph{approximate dynamic programs} (ADPs) exist in the literature to solve these problems approximately. Some of the commonly used ADPs include \emph{policy rollout} \cite{Bertsekas_PolicyRoll}, \emph{hindsight optimization} \cite{Hindsight1,Hindsight2}, etc. A survey of the ADP approaches can be found in \cite{Chong2009}. Feature-based techniques and deep learning methods are gaining importance in the development of ADP approaches as discussed in \cite{Bertsekas_deeplearn}. These approximation techniques have been successfully adopted to solve real-time problems such as a UAV guidance control problem in \cite{Ragi_UAVpathPlanning, Ragi_DynUAV, Ragi_ACC2017}. Certain ADP approaches, especially the methods based on approximation in value space, require numerical approximation of the expectation in the objective function \cite{Ragi_UAVpathPlanning}. In this study, our objective is to develop Monte-Carlo-based approaches to approximate the expectation in the objective function in the long (but finite) horizon optimal control problems, and study their convergence. A preliminary version of the parts of this paper were published as \cite{Ragi_CCTA}. This paper differs from the conference paper \cite{Ragi_CCTA} in the following ways: 1) we include detailed proofs omitted in the conference version; 2) we derive new convergence results and proofs in Section~\ref{sec:nonoverlap}; 3) we implement our methods for a new case study - UAV path optimization problem.      

\subsection{Preliminaries}
A long horizon optimal control problem is described as follows. Let $x_k$ be the state vector for a system at time $k$, which evolves according to a discrete stochastic process as follows: 
\begin{equation}\label{eq:lhc_cost}
x_{k+1} = f(x_k,u_k,w_k)
\end{equation} 
where $f(\cdot)$ represents the state-transition mapping, $u_k$ is the control vector, and $w_k$ random disturbance. Let $g(x_k,u_k)$ represent the cost (a real value) of being in state $x_k$ and performing action $u_k$. The functions $f$ and $g$ are independent of $k$ in our study, but can generally depend on $k$. The goal is to optimize the control vectors $u_k, k=0,\ldots,H-1$ such that the expected cumulative cost is minimized, i.e., the goal leads to solving the following optimization problem
\begin{equation}\label{eq:lhc_optim}
\min_{u_k, k=0,\ldots, H-1} \,\, \text{E} \left[\sum_{k=0}^{H-1} g(x_k,u_k)\right],
\end{equation} 
where $H$ is the length of the planning horizon. Let $x_0$ be the initial state and according to the dynamic programming formulation the optimal cost function is given by
\begin{equation}\label{eq:lhc_DP}
J_0^*(x_0) = \min_{u_0} \,\, \text{E} \left[ g(x_0,u_0) + J_1^*(x_1) \right],
\end{equation}     
where $J_1^*$ represents the optimal cost-to-go from time $k=1$, and $x_1 = f(x_0,u_0,w_0)$. In this study, \emph{long horizon} refers to the condition that $H$ is sufficiently large that the optimal policy is approximately \emph{stationary} (independent of $k$). Solving the above optimization problem is not tractable mainly due to two reasons: the expectation $E[\cdot]$ and the optimal cost-to-go $J_1^*$ are hard to evaluate and are usually approximated by numerical methods or ADP approaches. 

An ADP approach called \emph{nominal belief-state optimization} (NBO) \cite{Ragi_UAVpathPlanning,Chong_NBO} was developed primarily to approximate the above expectation. In NBO, the expectation is replaced by a sample state trajectory generated with an assumption that the future noise variables in the system take so called nominal or mean values, thus making the above objective function deterministic. The NBO method was developed to solve a UAV path optimization problem, which was posed as a \emph{partially observable Markov decision process} (POMDP). POMDP generalizes the long horizon optimal control problem described in Eq.~\ref{eq:lhc_optim} in that the system state is assumed to be ``partially'' observable, which is inferred via using noisy observations and Bayes rules. Although the performance of the NBO approach was satisfactory, in that it allowed to obtain reasonably optimal control commands for the UAVs, it ignored the uncertainty due to noise disturbances thus leading to inaccurate evaluation of the objective function. To address this challenge, certain methods exist in the literature usually referred to as Monte-Carlo Tree Search (MCTS) methods as surveyed in \cite{Browne_mcts}. 

Inspired from the NBO method and MCTS methods, we develop a new MCTS method called \emph{random sampling - multipath hypothesis propagation} (RS-MHP) and derive convergence results. In this study, we use the NBO approach as a benchmark for performance assessment since RS-MHP builds on the NBO approach. 

\section{Random Sampling Multipath Hypothesis Propagation (RS-MHP)}
In the NBO method, the expectation is replaced by a sample trajectory of the states (as opposed to random states) generated by 
\begin{equation}
\tilde{x}_{k+1} = f(\tilde{x}_k,u_k,\bar{w}_k), \,\, k=0,\ldots
\end{equation}         
where $\tilde{x}_0 = x_0$ (initial state or current state), and $\bar{w}_k$ is the mean of the random variable $w_k$. Thus, the long horizon optimal control problem, with NBO approximation, reduces to 
\begin{equation}\label{eq:lhc_optim_NBO}
\min_{u_k} \,\, \sum_{k=0}^{H-1} g(\tilde{x}_k,u_k).
\end{equation} 
The above reduced problem, without the need for evaluating the expectation, can significantly reduce the computational burden in solving the long horizon control problems. However, the downside with this approach is it completely ignores the uncertainty in the state evolution, and may generate severely sub-optimal controls. To overcome this trivialization, we develop a Monte-Carlo approach to approximate the expectation described as follows. We will follow the tree-like sampling approach as in Figure~\ref{fig:tree_models}(a). For time step $k=1$, we sample the probability distribution of the noise disturbance $N$ times to generate the samples $w_0^i$ with corresponding probability $p_{0}^i$, $i=1,\ldots,N$. Using these, we generate $N$ sample states at $k=1$ generated according to
\begin{equation}
x_1^i = f(x_0,u_0,w_0^i), \,\, \forall i.    
\end{equation} 
We repeat this sampling approach for time $k=2$, i.e., we generate $N$ noise samples $w_1^i$ with corresponding probability $p_{1}^i$, $i=1,\ldots,N$. Using these noise samples and the sample states from the previous time step, we generate $N^2$ sample states at $k=2$ according to
\begin{equation}x_2^{i,j} = f(x_1^i,u_1,w_1^j), \,\, \forall i,j. \end{equation} 

\begin{figure}[t]
\centering
    \includegraphics[width= \columnwidth, trim = 200 260 150 80,clip]{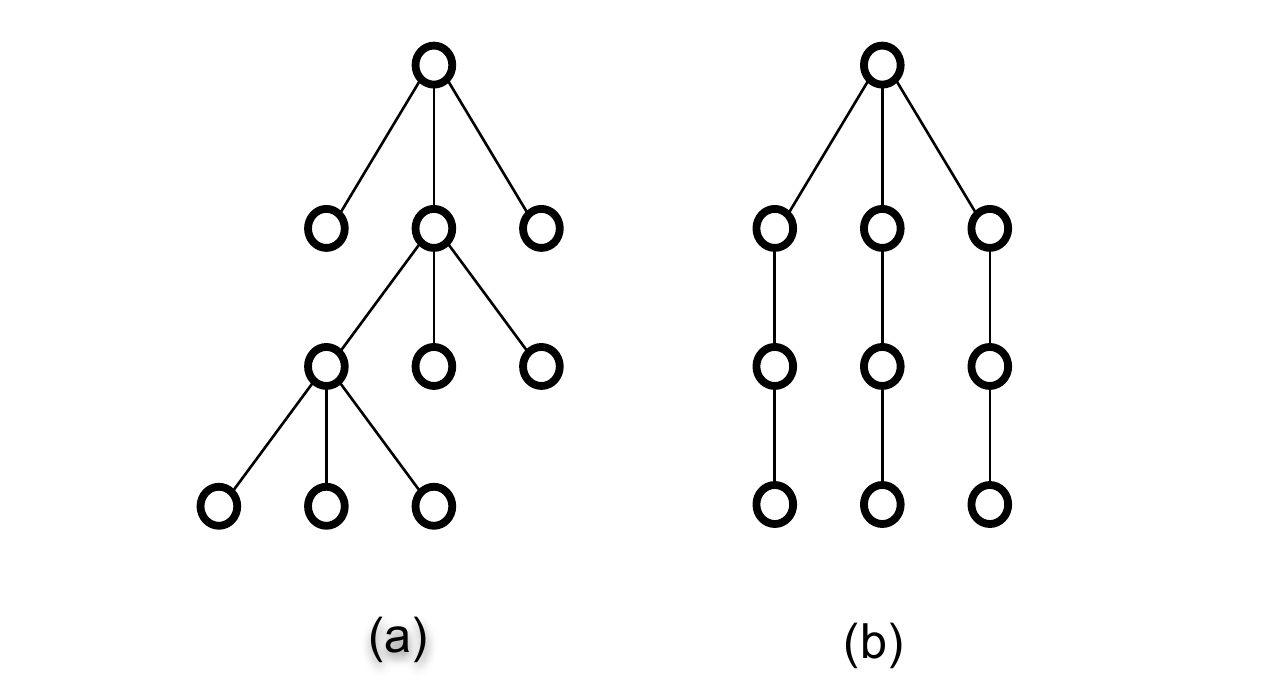}
    \caption{State trajectory sampling models: (a) tree branching model, (b) non-overlapping branching model.}
    \label{fig:tree_models}
\end{figure}

We repeat the above sampling procedure until the last time step $k=H-1$ to generate $N^{H-1}$ possible state evolution trajectories using $N$ noise samples generated in each time step as depicted in Figure~\ref{fig:tree_models}(a). Sampling approach in Figure~\ref{fig:tree_models}(b) will be discussed later.  

One can now replace the expectation in Eq.~\ref{eq:lhc_optim} with the weighted average of the cumulative cost corresponding to each state evolution trajectory, where the weights are the probabilities or likeliness of the trajectories. Clearly, the number of possible state trajectories grow exponentially with the horizon length $H$. Although this approach is not novel as many such methods exist in the literature often classified as Monte-Carlo Tree Search methods, our study is focused on deriving convergence results of RS-MHP approaches. 

To avoid the exponential growth in our RS-MHP approach, at each time step we retain only $M$ sample states and prune the remaining states, and if the number of sample states at a given time instance is less than or equal to $M$, we do not perform pruning. For pruning, at each time $k$, we rank the state trajectories up to time $k$ according to their likeliness (obtained by multiplying the probabilities of all the noise samples that generated the trajectory) and retain the top $M$ trajectories with highest likeliness and prune the rest. With this procedure, at $k=H-1$, there would be only $M$ state trajectories. With pruning, the number of trajectories remains a constant irrespective of the time horizon length. An illustration of the above RS-MHP approach is shown in Figure~\ref{fig:noise_sampling} along with the NBO approach. Here, we consider pruning based on likeliness of the state trajectories as the costs from these trajectories have higher contribution in the cost function in Eq.~\ref{eq:lhc_cost} than the less likely trajectories. We will consider other pruning strategies to further improve the approximation error in our future study.     

\begin{figure}[t]
\centering
    \includegraphics[width= 0.8\columnwidth, trim = 0 150 380 100,clip]{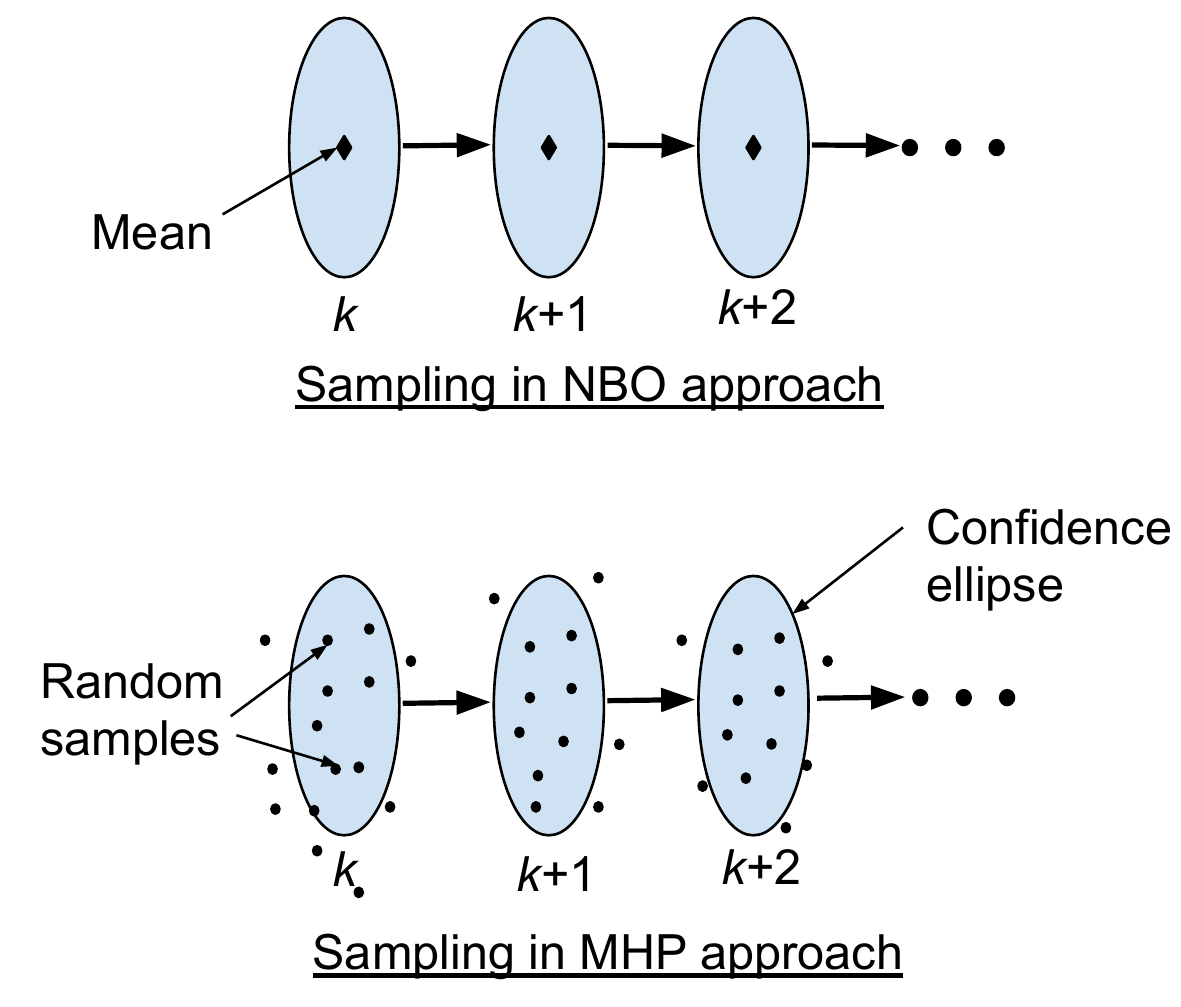}
    \caption{Sampling probability distributions of noise variables: NBO vs.\ MHP.}
    \label{fig:noise_sampling}
\end{figure}

Let $i=1,\ldots,M$ represent the indices of the $M$ distinct state trajectories with $q_1,q_2,\ldots$ being their likeliness index evaluated using the probabilities of the noise samples that generate the trajectory $i$ over time. Let $J$ represent the actual objective function as described below 
\begin{equation}
    J = \text{E} \left[\sum_{k=0}^{H-1} g(x_k,u_k) \right].
\end{equation}

We can now approximate the objective function $J$ in four possible ways as described below (assuming $N>M$). Let $x_k^i$ represent the state at time $k$ in the $i$th state trajectory. 
\begin{itemize}
    \item[(I)] \emph{Sample Averaging}. We can simply approximate the expectation with an average over all possible trajectories as follows:   
    \begin{equation}
    \begin{aligned}
    &\text{No pruning: }  J \approx \tilde{J}_{NP} =   \frac{1}{N^{H-1}}\sum_{i=1}^{N^{H-1}} \left( \sum_{k=0}^{H-1} g(x_k^i,u_k) \right)\\
    &\text{With pruning: }  J \approx \tilde{J}_P = \frac{1}{M}\sum_{i=1}^M \left( \sum_{k=0}^{H-1} g(x_k^i,u_k) \right)\\
    \end{aligned}
    \end{equation}
    
    \item[(II)] \emph{Weighted Sample Averaging}. We can also approximate the expectation with a weighted average with weights being the normalized likeliness indices of the state trajectories given by $q_i, i = 1,\ldots$ (and $\bar{q}_i$ in the pruned case) as follows:   
      \begin{equation}
    \begin{aligned}
    &\text{No pruning: }  J \approx \bar{J}_{NP} = \frac{1}{N^{H-1}} \sum_{i=1}^{N^{H-1}} q_i \left( \sum_{k=0}^{H-1} g(x_k^i,u_k) \right)\\
    &\text{With pruning: }  J \approx \bar{J}_P = \frac{1}{M}\sum_{i=1}^M \bar{q}_i \left( \sum_{k=0}^{H-1} g(x_k^i,u_k) \right).\\
    \end{aligned}
    \end{equation}
    where $\sum_{i=1}^{N^{H-1}} q_i = N^{H-1}$ and $\sum_{i=1}^{M} q_i = M$.
\end{itemize}

For a given sequence of control decisions $u_0,u_1,\ldots$, let $g_i$ denote the cost of the $i$th trajectory given by 
\begin{equation}g_i = \sum_{k=0}^{H-1} g(x_k^i,u_k).\end{equation}
Clearly, $g_1,g_2,\ldots$ are identically distributed random variables, but are dependent due to the overlapping state trajectories in the tree-like sampling approach in Figure~\ref{fig:tree_models}(a), where $\text{E}[g_i] = J,\, \forall i$. 

The below result suggests that with sufficient number of sample state trajectories (large $N$), the approximation error in $\tilde{J}_{NP}$ becomes small enough to ignore. 
\begin{prop}\label{lemma:stronglaw}
For any given sequence of actions $u_0, u_l, \ldots$, if the random variables $g_1, g_2, \ldots$ have finite variances, $\tilde{J}_{NP}$ converges to $J$ in probability.        
\end{prop}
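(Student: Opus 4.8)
The plan is to establish convergence in probability through Chebyshev's inequality, which reduces the whole problem to showing that $\var(\tilde{J}_{NP}) \to 0$ as $N \to \infty$. First I would note that $\tilde{J}_{NP}$ is an unbiased estimator of $J$: since $\tilde{J}_{NP} = \frac{1}{N^{H-1}}\sum_{i=1}^{N^{H-1}} g_i$ and $\text{E}[g_i] = J$ for every $i$, linearity of expectation gives $\text{E}[\tilde{J}_{NP}] = J$. Consequently, for any $\ep > 0$, Chebyshev's inequality yields $\text{P}(|\tilde{J}_{NP} - J| \ge \ep) \le \var(\tilde{J}_{NP})/\ep^2$, so it is enough to prove that the variance vanishes in the limit.

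Next I would expand the variance in terms of covariances,
\begin{equation}
\var(\tilde{J}_{NP}) = \frac{1}{N^{2(H-1)}} \sum_{i=1}^{N^{H-1}} \sum_{j=1}^{N^{H-1}} \text{Cov}(g_i, g_j),
\end{equation}
and exploit the structural feature of the tree-sampling model. I would index each trajectory by its sequence of noise-sample choices $(c_1,\ldots,c_{H-1})$, where $c_m \in \{1,\ldots,N\}$ selects the sample $w_{m-1}^{c_m}$ applied at step $m$; each cost $g_i$ is then a deterministic function of the noise samples lying along its path. The crucial observation is that two trajectories differing in their noise choice at every level, i.e.\ $c_m \neq c'_m$ for all $m$, are functions of disjoint sets of independent noise samples and are therefore independent, so their covariance is zero. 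Hence only pairs sharing at least one noise sample contribute to the sum.

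The final step is a combinatorial count. The number of ordered pairs of trajectories sharing no noise sample is $[N(N-1)]^{H-1}$ (there are $N(N-1)$ admissible choices of $(c_m,c'_m)$ at each of the $H-1$ independent levels), so the number of pairs carrying a possibly nonzero covariance is $N^{2(H-1)} - [N(N-1)]^{H-1}$. Bounding each such covariance by $\sigma^2 := \var(g_1) < \infty$ via Cauchy--Schwarz (using that the $g_i$ are identically distributed with finite variance), I obtain
\begin{equation}
\var(\tilde{J}_{NP}) \le \sigma^2 \, \frac{N^{2(H-1)} - [N(N-1)]^{H-1}}{N^{2(H-1)}} = \sigma^2 \left[1 - \left(1 - \tfrac{1}{N}\right)^{H-1}\right].
\end{equation}
Since $H$ is fixed and $(1 - 1/N)^{H-1} \to 1$ as $N \to \infty$, the right-hand side tends to $0$, which completes the argument through Chebyshev's inequality.

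I expect the main obstacle to be the independence step. In the tree model the same noise sample $w_{m-1}^{c_m}$ is reused across many branches, so two trajectories can share randomness even after their states have diverged; one must argue carefully that independence follows from disjointness of the underlying noise-sample \emph{sets}, not merely from divergence of the state paths. The other delicate point is that each surviving covariance is only $O(1)$, so the conclusion relies on the fact that the fraction of dependent pairs is $O(1/N)$ and therefore drives the averaged variance to zero.
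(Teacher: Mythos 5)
Your proof is correct, and it takes a genuinely different route from the paper's. The paper arranges the $N^{H-1}$ trajectory costs into a sequence ordered by branch position and invokes a weak law of large numbers for weakly correlated sequences (Ex.~254 of Cacoullos), arguing that $\text{Cov}(g_i,g_j)=0$ once $|i-j|>N$ because branches descending from different parent nodes are independent, so the covariance vanishes in the lag and the cited theorem delivers convergence in probability. You instead argue directly at the level of second moments: unbiasedness plus Chebyshev reduces everything to $\var(\tilde{J}_{NP})\to 0$, which you obtain from an exact count of the ordered pairs of trajectories that can be correlated --- those sharing at least one noise sample --- yielding the explicit bound $\var(\tilde{J}_{NP})\le \sigma^2\bigl[1-(1-1/N)^{H-1}\bigr]$. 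Your version buys several things the paper's does not: a quantitative rate of order $H/N$; a clean handling of the fact that $\{g_i\}$ is really a triangular array re-drawn for each $N$ rather than a single infinite sequence with a fixed covariance structure (indeed, under lexicographic ordering the paper's claim that $\text{Cov}(g_i,g_j)=0$ whenever $|i-j|>N$ does not survive scrutiny for deeper trees, since two branches far apart in the enumeration can still share an early noise sample); and an explicit recognition that, because the same $N$ noise samples per level are reused across all branches (cf.\ $x_2^{i,j}=f(x_1^i,u_1,w_1^j)$), independence must be read off from disjointness of the underlying noise-sample sets rather than from divergence of the state paths --- a subtlety the paper glosses over. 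The paper's argument is shorter and outsources the limit theorem to a reference; yours is self-contained, sharper, and arguably closer to what the sampling scheme actually requires.
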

\begin{proof}
From Ex.~254 in \cite{Cacoullos_Ex_in_Prob}, we know that $\tilde{J}_{NP} \xrightarrow{\text{P}} J$ if 
\begin{equation}
\lim_{|i-j|\rightarrow \infty} \text{Cov}(g_i,g_j) = 0,
\end{equation} 
where Cov() represents covariance. Suppose, the sequence $g_1, g_2, \ldots$ is arranged such that $g_1$ represents the cost for the left-most branch in Figure~\ref{fig:tree_models}(a), and $g_2$ representing the second branch from the left, and so on. Clearly, the first $g_1, g_2, \ldots, g_N$ are dependent random variables as they share the same parent node, whereas the next $N$ terms $g_{N+1}, g_{N+2},\ldots,g_{2N}$, although dependent among themselves, are independent of the previous $N$ terms (as these branches evolve from a separate parent node), and so on. Thus, $\text{Cov}(g_i,g_j)=0$ if $|i-j|>N$, which implies $\lim_{|i-j|\rightarrow \infty} \text{Cov}(g_i,g_j) = 0$.\QED
\end{proof} 


Furthermore, we can apply similar arguments to prove the convergence of  $\bar{J}_{NP}$ in probability. 

\begin{prop}\label{lemma:weighted_conv}
For a given sequence of actions $u_0, \ldots, u_{H-1}$, if $g_1, g_2, \ldots$ have finite variances, then $\bar{J}_{NP}$ converges to $J$ in probability.     
\end{prop}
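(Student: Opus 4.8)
The plan is to reuse the covariance criterion of Ex.~254 in \cite{Cacoullos_Ex_in_Prob} exactly as in the proof of Proposition~\ref{lemma:stronglaw}, but applied to the \emph{weighted} summands rather than to the bare costs $g_i$. Writing the unnormalized likeliness of the $i$th trajectory as $\ell_i$, the normalized weight is $q_i = \ell_i/\bar{\ell}$ with $\bar{\ell} = \tfrac{1}{N^{H-1}}\sum_{j}\ell_j$ (one checks $\sum_i q_i = N^{H-1}$ as required), so the estimator becomes a ratio $\bar{J}_{NP} = A_N/B_N$, where $A_N = \tfrac{1}{N^{H-1}}\sum_i \ell_i g_i$ and $B_N = \tfrac{1}{N^{H-1}}\sum_i \ell_i$. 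First I would verify that $\ell_i$ and $\ell_i g_i$ inherit the same dependency structure as the $g_i$: since the likeliness of a trajectory is the product of the probabilities of the noise samples along that trajectory, both $\ell_i$ and $\ell_i g_i$ depend only on the samples in the subtree containing trajectory $i$. Ordering the trajectories left-to-right as in Figure~\ref{fig:tree_models}(a), this gives $\text{Cov}(\ell_i g_i,\ell_j g_j)=0$ and $\text{Cov}(\ell_i,\ell_j)=0$ whenever $|i-j|>N$, so both covariance sequences vanish as $|i-j|\to\infty$.

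Next I would check the finite-variance hypothesis required by Ex.~254. Because each likeliness index is a product of probabilities, $\ell_i\in[0,1]$, so $\ell_i$ has finite variance automatically; and $\var(\ell_i g_i)\le \text{E}[(\ell_i g_i)^2]\le \text{E}[g_i^2]<\infty$ follows from $\ell_i\le 1$ together with the assumed finite variance of the $g_i$. Applying Ex.~254 to each of the two sequences then yields $A_N\xrightarrow{\text{P}}\text{E}[\ell_i g_i]$ and $B_N\xrightarrow{\text{P}}\text{E}[\ell_i]$, where $\text{E}[\ell_i]>0$ since $\ell_i\ge 0$ is not identically zero. A continuous-mapping (Slutsky) argument for the ratio then gives $\bar{J}_{NP}\xrightarrow{\text{P}}\text{E}[\ell_i g_i]/\text{E}[\ell_i]$, and it remains to identify this deterministic limit with $J$; this last step uses the fact that the likeliness weights are the normalized trajectory probabilities, so that the ratio is precisely the probability-weighted mean of the trajectory costs, i.e.\ a Monte-Carlo form of $\text{E}[\sum_{k}g(x_k,u_k)]$.

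I expect the main obstacle to be exactly the global normalization of the weights. A naive ``similar argument'' would set $Y_i = q_i g_i$ and apply Ex.~254 directly to $\tfrac{1}{N^{H-1}}\sum_i Y_i$; but the normalizer $\bar{\ell}$ depends on every trajectory, so the $Y_i$ are no longer block-independent across distinct subtrees and $\text{Cov}(Y_i,Y_j)$ need not vanish for $|i-j|>N$, which is precisely the property that made Proposition~\ref{lemma:stronglaw} work. Decomposing $\bar{J}_{NP}$ into the ratio $A_N/B_N$ is what restores the clean block structure, because $\ell_i g_i$ and $\ell_i$ (unlike $q_i g_i$) are local to a single subtree; the price is that convergence in probability must then be pushed through the ratio via Slutsky's theorem. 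The secondary substantive point is confirming that the resulting limit equals $J$, which is where the paper's sampling convention for the noise draws and their likeliness weights must be invoked explicitly.
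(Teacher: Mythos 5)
Your route is genuinely different from the paper's. The paper never touches the normalization of the weights: it invokes a result from \cite{Nasrollah_conv_weigthavg} asserting that if the unweighted average converges in probability and the weights $q_1,q_2,\ldots$ are monotonically decreasing, then the weighted average converges to the same limit, and then argues that the trajectories can WLOG be reindexed so that $q_1\geq q_2\geq\cdots$. Your decomposition $\bar{J}_{NP}=A_N/B_N$ with $A_N=\tfrac{1}{N^{H-1}}\sum_i \ell_i g_i$ and $B_N=\tfrac{1}{N^{H-1}}\sum_i \ell_i$, followed by the covariance criterion on each piece and Slutsky for the ratio, is more structural, and your observation that $q_ig_i$ is \emph{not} block-local (because the normalizer $\bar{\ell}$ couples all subtrees) whereas $\ell_i g_i$ is, is a genuine point that the paper's proof never has to confront.

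However, there is a gap at exactly the step you defer. Carried to completion, your argument yields $\bar{J}_{NP}\xrightarrow{\text{P}}\text{E}[\ell g]/\text{E}[\ell]$, and this limit does not equal $J$ under the paper's sampling scheme. The noise samples $w_k^i$ are drawn from the true disturbance distribution --- that is precisely what makes $\text{E}[g_i]=J$ in Proposition~\ref{lemma:stronglaw} --- and $\ell_i$ is the product of the probabilities of those draws, not an importance ratio between a target and a proposal. Consequently $\text{E}[\ell g]/\text{E}[\ell]$ is a likelihood-tilted mean of $g$ and coincides with $J=\text{E}[g]$ only when $\text{Cov}(\ell,g)=0$, which there is no reason to expect (in the LQ example, low-likelihood noise realizations systematically produce larger quadratic costs). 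So your proof establishes convergence of $\bar{J}_{NP}$ to a deterministic constant, but identifying that constant with $J$ requires either an additional assumption or a redefinition of the weights as Radon--Nikodym ratios with respect to a distinct proposal distribution; as written the claim is not closed. Two smaller remarks: the threshold $\text{Cov}(\ell_ig_i,\ell_jg_j)=0$ for $|i-j|>N$, which you inherit from the proof of Proposition~\ref{lemma:stronglaw}, is correct only for $H=2$ --- in a depth-$(H-1)$ tree two leaves are independent only when they lie in different level-one subtrees, i.e., $|i-j|>N^{H-2}$ (the limit condition still holds, but the constant is wrong) --- and your bound $\ell_i\leq 1$ presumes the $p_k^i$ are probabilities rather than density values, which the paper leaves ambiguous.
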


\begin{proof}
From \cite{Nasrollah_conv_weigthavg}, we know that if $\tilde{J}_{NP} \xrightarrow{\text{P}} J$ (which is true as shown in Proposition~\ref{lemma:stronglaw}), and if the weights $q_1, q_2, \ldots$ are monotonically decreasing, then $\bar{J}_{NP} \xrightarrow{\text{P}} J$.  Without loss of generality, we can arrange the trajectory costs $g_i$ such that their likeliness indices are monotonically decreasing, i.e., $q_1\geq q_2 \geq q_3 \geq \ldots$, which completes the proof. \QED   

%
\end{proof}


\subsection{Non-overlapping State Trajectories or Tree Branches}\label{sec:nonoverlap}
Suppose the state sample trajectories are generated independently of each other, where the state trajectories do not share any common state samples as depicted in Figure~\ref{fig:tree_models}b. In this new sampling approach, given $u_0,u_1,\ldots$ are the control decisions over the planning horizon, let $p_i$ represent the cost associated with the $i$th state trajectory. We can approximate the LHC objective function as follows:
\begin{equation}
\begin{aligned}
\bar{J}_N & =    \frac{1}{N}\sum_{i=1}^{N} p_i \\
\tilde{J}_N & =    \frac{1}{N}\sum_{i=1}^{N}q_i p_i,
\end{aligned}
\end{equation}
where $q_i$ represents the likeliness index of the $i$th trajectory and $\sum_{i} q_i = N$. From propositions \ref{lemma:stronglaw} and \ref{lemma:weighted_conv}, we can verify that $\bar{J}_{N} \xrightarrow{\text{P}} J$ and $\tilde{J}_{N} \xrightarrow{\text{P}} J$. Furthermore, since $p_1,p_2,\ldots$ are i.i.d., due to the strong law of large numbers, we can verify that $\bar{J}_{N}$ converges to $J$ almost surely. We can further derive the rate of convergence (in probability) for a special case as discussed below. Suppose the state-transition and cost functions are linear (motivated by the fact that the linear models capture the state dynamics well in most control problems) as described below:
\begin{equation}\label{eq:linstate}
\begin{aligned}
x_{k+1} &= A x_k + B u_k + w_k, \, w_k \sim \mathcal{N}(0,\Sigma)\\
g(x_k,u_k) &= C x_k + D u_k,
\end{aligned}
\end{equation}
where $g(x_k,u_k)$ is a scalar function. The cost from the sample trajectory $i$ is given by
\begin{equation}
p_i = \sum_{k=1}^H g(x_k^i,u_k) = \sum_{k=1}^H (Cx_k^i + Du_k),
\end{equation} 
where $x_k^i$ is the sampled state at time step $k$ from the $i$th trajectory. Using the linear expressions in Eq.~\ref{eq:linstate}, we can verify $p_i$ further satisfies the following equation:
\begin{equation}\label{eq:pEp}
p_i - \text{E}[p_i] = C\left[ \sum_{k=0}^{H-1} \left(\sum_{q=0}^{H-k-1} A^q\right)w_k \right] = C\left[ \sum_{k=0}^{H-1} \mathcal{A}_k w_k \right],  
\end{equation}
where $\mathcal{A}_k = \sum_{q=0}^{H-k-1} A^q$. 
\begin{prop}
For a given sequence of actions $u_0, \ldots, u_{H-1}$
\begin{equation}
\text{P}\left(|J_N - J| \geq \epsilon \right) \leq \frac{\text{constant}}{N\epsilon^2}. 
\end{equation}    
\end{prop}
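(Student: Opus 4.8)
The plan is to establish this polynomial decay rate via Chebyshev's inequality applied to the sample-average estimator $\bar{J}_N = \frac{1}{N}\sum_{i=1}^{N} p_i$ (which is what $J_N$ denotes here). First I would record that, because the non-overlapping trajectories are generated independently, the costs $p_1, p_2, \ldots$ are i.i.d.\ with common mean $\text{E}[p_i] = J$, so that $\bar{J}_N$ is an unbiased estimator of $J$. Chebyshev's inequality then yields
\begin{equation}
\text{P}\left(|\bar{J}_N - J| \geq \epsilon \right) \leq \frac{\text{Var}(\bar{J}_N)}{\epsilon^2},
\end{equation}
and the independence of the $p_i$ reduces the numerator to $\text{Var}(p_1)/N$. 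It therefore suffices to show that $\text{Var}(p_1)$ is a finite constant independent of $N$, which will serve as the ``constant'' in the statement.

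The second step is to compute $\text{Var}(p_1)$ explicitly using the linear structure. Starting from Eq.~\ref{eq:pEp}, I would write $p_i - \text{E}[p_i] = C\sum_{k=0}^{H-1}\mathcal{A}_k w_k$ with $\mathcal{A}_k = \sum_{q=0}^{H-k-1} A^q$. Since the per-step noise vectors $w_0, \ldots, w_{H-1}$ are independent with $w_k \sim \mathcal{N}(0,\Sigma)$, the variance of this scalar linear combination splits across time steps and the cross-terms vanish, giving
\begin{equation}
\text{Var}(p_1) = \sum_{k=0}^{H-1} C\,\mathcal{A}_k\,\Sigma\,\mathcal{A}_k^{\top}\,C^{\top}.
\end{equation}
Because the horizon $H$ is finite and each summand is a fixed quadratic form in the known matrices $A$, $C$, and $\Sigma$, this is a finite nonnegative scalar; denoting it by $\sigma^2$ and substituting back completes the bound with $\text{constant} = \sigma^2$.

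I do not anticipate a serious obstacle, as the argument is essentially Chebyshev's inequality combined with a Gaussian variance computation. The two points requiring care are (i) justifying that the cross-terms in the expansion of $\text{Var}\!\left(\sum_k \mathcal{A}_k w_k\right)$ drop out, which relies on the independence of the $w_k$ \emph{across} time steps rather than on any property of an individual trajectory, and (ii) confirming that the resulting $\sigma^2$ carries no dependence on $N$, so that the claimed $1/N$ rate is genuine. If a fully explicit constant were desired one could further bound $\sigma^2 \leq H \max_k \|C\mathcal{A}_k\|^2 \,\|\Sigma\|$, but since the statement requires only finiteness I would leave $\sigma^2$ symbolic.
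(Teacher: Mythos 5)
Your proposal is correct and follows essentially the same route as the paper: Chebyshev's inequality applied to the i.i.d.\ sample average of the non-overlapping trajectory costs, with $\text{Var}(p)$ computed from Eq.~\ref{eq:pEp} as $C\left[\sum_{k=0}^{H-1}\mathcal{A}_k\Sigma\mathcal{A}_k^{\top}\right]C^{\top}$ serving as the constant. Your version is marginally more explicit about why the cross-terms vanish (independence of the $w_k$ across time steps) and why the constant is $N$-independent, but the substance is identical.
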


\begin{proof}
Let $p$ represent the cost for a sampled state trajectory. Using Eq.~\ref{eq:pEp}, we can verify 
\begin{equation}
\begin{aligned}
\text{Var}(p) &= \text{E}\left[(p - \text{E}[p])^{\text{T}} (p - \text{E}[p]) \right]\\
&= C\left[ \sum_{k=0}^{H-1} \mathcal{A}_k \Sigma \mathcal{A}_k^{\text{T}} \right] C^{\text{T}},
\end{aligned}
\end{equation}
which is a real scalar. Thus, $\text{Var}(J_N) = \text{Var}(p)/N$. 

Using Chebyshev's inequality, we can verify easily that
\begin{equation}
\text{P}\left(|J_N - J| \geq \epsilon \right) \leq \frac{\text{Var}(p)}{N\epsilon^2} = \frac{C\left[ \sum_{k=0}^{H-1} \mathcal{A}_k \Sigma \mathcal{A}_k^{\text{T}} \right] C^{\text{T}}}{N\epsilon^2}. 
\end{equation}
Furthermore, 
\begin{equation}
\lim_{N\rightarrow \infty} \text{P}\left(|J_N - J| \geq \epsilon \right) = 0, 
\end{equation}
which shows the convergence in probability as well. 
\end{proof}

\section{Case Studies}
We implement the above-discussed MHP methods in the context of two case studies: (a) linear quadratic Gaussian control (LQG); (b) path planning for unmanned aerial vehicles (UAVs). These case studies are discussed below.  
\subsection{Linear Quadratic Problem}
Although there are closed-form solutions for LQG problems, the below example allows us to quantify the benefits of using RS-MHP methods over existing similar methods, particularly NBO. Let the system state evolve according to the following linear equation:
\begin{equation}x_{k+1} = (1-a)x_k + au_k + w_k, \quad w_k \sim \mathcal{N}(0,\sigma^2),\end{equation}
where $0<a<1$ is a constant, and $w_k$ is a random disturbance modeled by a zero-mean Gaussian distribution with variance $\sigma^2$. The cost function over the time-horizon $H$ is defined as follows:
\begin{equation}J = \text{E} \left[ r(x_H-T)^2 + \sum_{k=0}^{H-1} u_k^2 \right],\end{equation}
where $r$ and $T$ are constants. This is a simplified oven temperature control example borrowed from \cite{bertsekas_lect}.

If we apply the traditional NBO method, assuming $H=2$, the cost function $J$ is approximated (assuming nominal values or zeros for $w_0$ and $w_1$) as 
\begin{equation}J_{\text{NBO}} = r\left((1-a)^2 x_0 + a(1-a)u_0 + a u_1 - T\right)^2 + u_0^2 + u_1^2\end{equation}
and the exact cost function $J$ can be evaluated analytically as
\begin{equation}\begin{aligned} 
J &= r\left((1-a)^2 x_0 + a(1-a)u_0 + a u_1 - T\right)^2 + u_0^2 + u_1^2\\
&+ r\sigma^2\left((1-a)^2 + 1\right).  
\end{aligned}
\end{equation}
We notice the approximation error due to the NBO method is \begin{equation}|J_{\text{NBO}} - J| = r\sigma^2\left((1-a)^2 + 1\right).\end{equation}
This approximation error for a generic time-horizon $H$ (the above error term is derived for $H=2$) is given by
\begin{equation}|J_{\text{NBO}} - J| = r\sigma^2\sum_{n=0}^{H-1}(1-a)^{2n}.\end{equation}
The above expression suggests that the NBO approximation error can be significantly high depending on the parameters $a$, $\sigma$, and $r$. With MHP approximation, the cost function reduces to
\begin{equation}J_{\text{MHP}} = \frac{1}{P} \left(\sum_{i=1}^P r(x_H^i-T)^2\right) + \sum_{k=0}^{H-1} u_k^2,\end{equation}
where $P$ is the number of state-trajectories generated using the MHP approach, and $x_H^i$ is the final state in the $i$th trajectory. Lemma~\ref{lemma:stronglaw} shows that the approximation error due to the above MHP method converges (in probability) to zero. We verify this result with a numerical simulation, where we implement the NBO and the MHP methods with the following assumptions: $x_0 = 0, r=10, T=1, H=2, u_0 = 0.55, u_1 = 0.17, \sigma = 1$. We vary $P$ from 100 to 10000 with increments of 100. Figure~\ref{fig:mhp_nbo} shows the cost function approximated using MHP and NBO methods. The figure clearly demonstrates that the error due to NBO approximation can be significantly high, while MHP performs better in cost approximation.
\begin{figure}[h]
\centering
    \includegraphics[width= \columnwidth, trim = 100 230 100 260,clip]{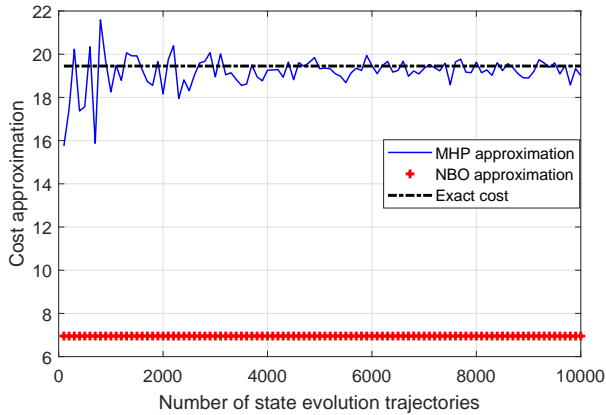}
    \caption{LQG problem: MHP vs.\ NBO}
    \label{fig:mhp_nbo}
\end{figure}

\subsection{UAV path planning problem}
We consider a UAV path planning problem, where the goal is to optimize the kinematic controls of a UAV to maximize a target tracking performance measure. Here, the UAV is assumed to be equipped with a sensor on-board that generates the location measurements of the target (a ground-based moving vehicle) corrupted by random noise. A detailed description of the problem can be found in \cite{Ragi_UAVpathPlanning}. In \cite{Ragi_UAVpathPlanning}, we posed this problem as a \emph{partially observable Markov decision process} (POMDP), where the POMDP led to solving a long horizon optimal control problem. We applied the NBO approach to solve the above POMDP. The resulting UAV path optimization problem is summarized as follows: 
\begin{equation*}
\min_{u} \, \text{E}\left[ \sum_{k=0}^{H-1} \text{tr} \,(\mathbf{P}_k(u)) \right] \xrightarrow{\text{NBO approx.}} \min_{u} \, \sum_{k=0}^{H-1} \text{tr} \,(\mathbf{\hat P}_k(u)),
\end{equation*}
where $\mathbf{P}_k(u)$ (a random variable) represents the error co-variance  matrix corresponding to the state of the system, tr() represents the matrix trace operator, $u$ is the sequence of UAV kinematic controls (e.g., forward acceleration and bank angle) applied over the discrete time planning horizon of length $H$ steps. After NBO approximation, the expectation over the random evolution of $\mathbf{P}_k(u)$ is replaced with the nominal sequence of the state covariance matrices $\text{tr} \,(\mathbf{\hat P}_k(u))$.

We now approximate the above objective function using the RS-MHP approach as follows:
\begin{equation*}
\begin{aligned}
\min_{u} \, \text{E}\left[ \sum_{k=0}^{H-1} \text{tr} \,(\mathbf{P}_k(u)) \right] &\xrightarrow{\text{RS-MHP approx.}} \\
&\min_{u} \, \frac{1}{N_T}\sum_{i=1}^{N}\sum_{k=0}^{H-1} \text{tr} \,(\mathbf{\tilde P}_k^i(u)),
\end{aligned}
\end{equation*}   
where $\mathbf{\tilde P}_k^i$ represents the state covariance matrix obtained from the $i$th state trajectory generated from the RS-MHP approach, and $N_T$ is the number of state trajectories. We implement this RS-MHP approach in MATLAB and run a Monte-Carlo study to see the impact of $N_T$ on the performance of the above UAV path planning algorithm, which is measured by the average target location estimation error. Figure~\ref{fig:MHP_MC} shows the cumulative distribution of average target location estimation errors from the RS-MHP approach with $H=6$, and for $N_T$ set to 50, 100, and 250. The figure shows a gradual increase in the UAV path optimization performance with increasing $N_T$ as expected. This result, as expected, also suggests that pruning methods (discussed in the previous section) would degrade the performance of the RS-MHP methods but can provide gains in terms of computational intensity.         
      
\begin{figure}[h]
\centering
    \includegraphics[width= \columnwidth, trim = 10 200 10 190,clip]{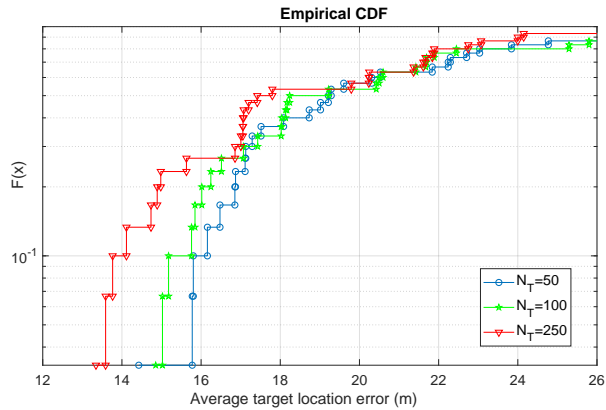}
    \caption{Cumulative distribution of average target location errors. Here $N_T$ represents the number of state evolution trajectories.}
    \label{fig:MHP_MC}
\end{figure}

RS-MHP has better capability in approximating the expectation operator in Eq.~\ref{eq:lhc_cost} than the NBO approach as we consider multiple hypotheses of state trajectories in RS-MHP as opposed to a single hypothesis in NBO as demonstrated in Figure~\ref{fig:RS_MHP_vs_NBO}. This is demonstrated in the above case studies.       

\begin{figure}[h]
\centering
    \includegraphics[width= \columnwidth, trim = 70 225 50 220,clip]{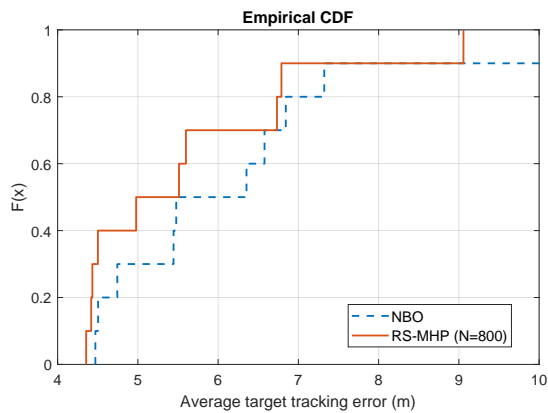}
    \caption{Cumulative distribution of average target location errors: NBO vs.\ RS-MHP.}
    \label{fig:RS_MHP_vs_NBO}
\end{figure}

\section{Conclusions}
In this paper, we developed a Monte-Carlo tree search method called \emph{random sampling - multipath hypothesis propagation} or RS-MHP to approximate the expectation operator in long horizon optimal control problems. Although variants of these methods exist in the literature, we focused on the convergence analysis of these approximation methods. The basic theme of these methods is to evolve the system state over multiple trajectories into the future while sampling the noise disturbances at each time-step. We derive convergence results that show that the cost approximation errors from our RS-MHP methods converge (in probability) toward zero as the sample size increases. We conducted a numerical study to assess the performance of our methods in two case studies: linear quadratic control problem and UAV path optimization problem. In both case studies, we demonstrated the benefits of our approach against an existing approach called \emph{nominal belief-state optimization} or NBO (used as a benchmark). 
       
\section{Acknowledgment}
The authors would like to thank Nicolas Lanchier, Arizona State University, for his valuable inputs and feedback on the convergence results discussed in this paper.

\bibliographystyle{IEEEtran} 
\bibliography{ref}

\begin{thebibliography}{10}
\providecommand{\url}[1]{#1}
\csname url@rmstyle\endcsname
\providecommand{\newblock}{\relax}
\providecommand{\bibinfo}[2]{#2}
\providecommand\BIBentrySTDinterwordspacing{\spaceskip=0pt\relax}
\providecommand\BIBentryALTinterwordstretchfactor{4}
\providecommand\BIBentryALTinterwordspacing{\spaceskip=\fontdimen2\font plus
\BIBentryALTinterwordstretchfactor\fontdimen3\font minus
  \fontdimen4\font\relax}
\providecommand\BIBforeignlanguage[2]{{%
\expandafter\ifx\csname l@#1\endcsname\relax
\typeout{** WARNING: IEEEtran.bst: No hyphenation pattern has been}%
\typeout{** loaded for the language `#1'. Using the pattern for}%
\typeout{** the default language instead.}%
\else
\language=\csname l@#1\endcsname
\fi
#2}}

\bibitem{Ragi_CCTA}
S.~Ragi and H.~D. Mittelmann, ``Random-sampling multipath hypothesis
  propagation for cost approximation in long-horizon optimal control,'' in
  \emph{Proc.\ 2020 IEEE Conference on Control Technology and Applications
  (CCTA)}, Montreal, Canada, 2020, pp. 14--18.

\bibitem{Chong2009}
E.~K.~P. Chong, C.~M. Kreucher, and A.~O. Hero, ``Partially observable {M}arkov
  decision process approximations for adaptive sensing,'' \emph{Discrete Event
  Dynamic Systems}, vol.~19, no.~3, pp. 377--422, Sep 2009.

\bibitem{Bertsekas_PolicyRoll}
D.~P. Bertsekas and D.~A. Castanon, ``Rollout algorithms for stochastic
  scheduling problems,'' \emph{J.\ Heuristics}, vol.~5, pp. 89--108, 1999.

\bibitem{Hindsight1}
E.~K.~P. Chong, R.~L. Givan, and H.~S. Chang, ``A framework for
  simulation-based network control via hindsight optimization,'' in
  \emph{Proc.\ 39th {IEEE} Conf.\ Decision and Control}, Sydney, Australia,
  2000, pp. 1433--1438.

\bibitem{Hindsight2}
G.~Wu, E.~K.~P. Chong, and R.~Givan, ``Burst-level congestion control using
  hindsight optimization,'' \emph{{IEEE} Trans.\ Autom.\ Control}, vol.~47, pp.
  979--991, 2002.

\bibitem{Bertsekas_deeplearn}
D.~Bertsekas, ``Feature-based aggregation and deep reinforcement learning: {A}
  survey and some new implementations,'' \emph{IEEE/CAA Journal of Automatica
  Sinica}, no.~1, 2019.

\bibitem{Ragi_UAVpathPlanning}
S.~Ragi and E.~K.~P. Chong, ``{UAV} path planning in a dynamic environment via
  partially observable {M}arkov decision process,'' \emph{{IEEE} Trans.\
  Aerosp.\ Electron.\ Syst.}, vol.~49, pp. 2397--2412, 2013.

\bibitem{Ragi_DynUAV}
------, ``Dynamic {UAV} path planning for multitarget tracking,'' in
  \emph{Proc.\ American Control Conf.}, Montreal, Canada, 2012, pp. 3845--3850.

\bibitem{Ragi_ACC2017}
S.~Ragi and H.~D. Mittelmann, ``Mixed-integer nonlinear programming formulation
  of a {UAV} path optimization problem,'' in \emph{Proc.\ American Control
  Conf.}, Seattle, WA, 2017, pp. 406--411.

\bibitem{Chong_NBO}
S.~Miller, Z.~Harris, and E.~K.~P. Chong, ``A {POMDP} framework for coordinated
  guidance of autonomous {UAV}s for multitarget tracking,'' \emph{EURASIP
  Journal on Advances in Signal Processing}, 2009.

\bibitem{Browne_mcts}
C.~B. {Browne}, E.~{Powley}, D.~{Whitehouse}, S.~M. {Lucas}, P.~I. {Cowling},
  P.~{Rohlfshagen}, S.~{Tavener}, D.~{Perez}, S.~{Samothrakis}, and
  S.~{Colton}, ``A survey of {Monte Carlo} tree search methods,'' \emph{IEEE
  Transactions on Computational Intelligence and AI in Games}, vol.~4, no.~1,
  pp. 1--43, March 2012.

\bibitem{Cacoullos_Ex_in_Prob}
T.~Cacoullos, \emph{Exercises in Probability}.\hskip 1em plus 0.5em minus
  0.4em\relax New York: Springer-Verlag, 1989.

\bibitem{Nasrollah_conv_weigthavg}
N.~Etemadi, ``Convergence of weighted averages of random variables revisited,''
  \emph{Proc.\ American Mathematical Society}, vol. 134, no.~9, pp. 2739--2744,
  2006.

\bibitem{bertsekas_lect}
\BIBentryALTinterwordspacing
D.~P. Bertsekas. Lecture on reinforcement learning and optimal control.
  [Online]. Available:
  \url{http://www.mit.edu/~dimitrib/Slides_Lecture2_RLOC.pdf}
\BIBentrySTDinterwordspacing

\end{thebibliography}

\end{document}